\let\footnote=\endnote
\newtheorem{theorem}{Theorem}
\newtheorem{definition}{Definition}
\newtheorem{proposition}{Proposition}
\begin{document}

\begin{frontmatter}

\title{A Customer Choice Model with HALO Effect}
\runtitle{Halo Effect Model}

\author{Reza Yousefi Maragheh, Alexandra Chronopoulou}

\address{Department of Industrial \& Enterprise Systems Engineering\\ University of Illinois at Urbana-Champaign, Urbana, IL 61801\\
\texttt{ysfmrgh2@illinois.edu, achronop@illinois.edu}}
\author{James Mario Davis}
\address{Uber, Marketplace Reseach Group, San Fransisco, CA\\
\texttt{jamesmariodavis@gmail.com}}

%\pagestyle{empty}
%\RUNAUTHOR{Yousefi Maragheh, Chronopoulou and Davis}
%\RUNTITLE{A halo effect model}
%\TITLE{A Customer Choice Model with HALO Effect }
%\ARTICLEAUTHORS{
%\AUTHOR{Reza Yousefi Maragheh$^{*}$, Alexandra Chronopoulou$^{*}$, James Mario Davis$^{+}$\\ \vspace{0.1cm}}
%\AFF{$^{*}$Department Industrial and Enterprise Systems Engineering\\ University of Illinois at Urbana-Champaign\\ Urbana, IL 61801-3080\\ 
%$^{+}$ UBER XXX More Details XXX\\
%\EMAIL{ysfmrgh2@illinois.edu};
%\EMAIL{achronop@illinois.edu}; 
%\EMAIL{jamesmariodavis@gmail.com}}
%} 

%\ABSTRACT{ 
\begin{abstract}
In this paper, we propose an extension to the multinomial logit (MNL) model, the Halo MNL, that takes into account the interaction effects among  products in an assortment. In particular, this model incorporates pairwise interactions of items in an effort to describe positive/negative effects among products that are present/absent in the assortment. Furthermore, we are interested in establishing sufficient conditions for identifiability, in order to build robust estimation methods. Under strict identifiability conditions, we use maximum likelihood to estimate the model parameters for which we derive closed formulas. We also perform simulation experiments, in order to numerically evaluate our method,  study the accuracy of the estimators and  compare it with the MNL. Last, we fit our model in the Hotel Chain dataset in \cite{Bodea2009Choice-BasedChain}, and we compare it with  MNL in terms of efficiency, accuracy and robustness. We conclude that for rich enough datasets the model that includes interaction effects  performs better in terms of how well it fits the data.%}
\end{abstract}

%\KEYWORDS{
\begin{keyword}
\kwd{Customer Choice Model} \kwd{Multinomial Logit (MNL)} \kwd{Halo Effect} \kwd{Demand Estimation} \kwd{Identifiability}
\end{keyword}
%} 

\end{frontmatter}
%\maketitle

\section{Introduction}
Many industries, including online retailers, travel and transportation companies, are interested in selecting the optimal subset of \textit{products} or \textit{items}, to be presented to the customer, with the goal to maximize the acquired revenue. In order to solve this \textit{optimal assortment} problem, it is very  important to properly model the \textit{demand} of each product or in other words the customer's behavior in selecting it (customer's preference). Under different formulations, finding the optimal assortment boils down to a non trival optimization problem that has been addressed in the literature under different formulations, for example in \cite{Talluri2004RevenueBehavior, topal1, topal2}.

The multinomial logit model (MNL) is one of the most popular customer choice models in the literature, according to which a  weight is assigned to every item denoting the customer's preference relative to the other items, see for example in \cite{Train2009DiscreteSimulation}. However, in practice, not all products are available to the customer at all times, either due to stock-outs,  company's marketing decisions or  due to feasibility, when for example there are space restrictions.  As a result, the customer  is only able to select from an often limited subset of products.  It is thus natural to regard customer choices as a \textit{function} of the available items at the time of purchase. This implies that a customer's preferences are governed by an MNL \textit{conditional} on the  products presented to her, also called the \textit{offer} or \textit{availability set}.

One of the main properties of the MNL is the Independence of Irrelevant Alternatives (IIA), that is the assumption that a customer's choice between two alternative products is \textit{unaffected} by the presence of other products. However, it has been documented, \cite{Berlyne1960ConflictCuriosity, Kahn1991ModelingAssortments, Kahn1995ConsumerReview, Kalyanam2007DeconstructingContribution, Borle2005TheRetention}, that this is not the case in practice. On one hand, it has been observed that the effect of an item on the others can be \textit{positive}, since every additional item may add to the variety of the assortment making the customer more confident to buy, \cite{Kreps1979AFlexibility, Hoch1999TheAssortment, Helson1964Adaptation-levelTheory, McAlister1982VarietyReview}. On the other hand, the effect of an item may also be \textit{negative}, since it is possible to contribute to creating clutter making the customer's decision process harder, \cite{Boatwright2001ReducingApproach, Borle2005TheRetention, Iyengar2000WhenThing}.

In this paper, we propose an extension to the MNL that captures pairwise effects  of an item's availability on customer preferences. Specifically, we introduce two-way interaction terms in the linear MNL model, which implies   that we  no longer assume IIA shifts in customer preferences. From a statistical point of view, this additional  flexibility  increases the number of model parameters that need to be estimated, and affects the model's identifiability, that is the ability to properly estimate the  parameters with the available data.  The identifiability question does not directly follow from established results in the  literature about Generalized Linear Models, due to the unique features of  this framework.

Therefore, our  goals in this paper are the following: first, we want to develop an efficient estimation procedure for  the parameters of the proposed model, second  obtain sufficient identifiability conditions that guarantee the robustness of our approach, and third fit the model in a real dataset.

\section{Related Literature}
Estimation questions for customer choice models with different levels and types of product availability have already been addressed in the literature in different contexts. In this Section, we plan to review the related literature with particular focus on papers that propose estimation methods for demand and customer choice models when the product availability is not constant.

One of the  first papers that accounts for changes in  product availability is the work in \cite{Anupindi1998EstimationProducts}. In particular, the authors consider the case where the customer's first choice is missing  leading to product substitution, and propose an Expectation Maximization (EM) algorithm to estimate the substitution probabilities. In a much more general setting, the authors in  \cite{Talluri2004RevenueBehavior}  propose an EM algorithm for estimating the parameters of a  general customer choice model with varying offer sets. Under an Multinomial Logit  (MNL) customer choice model, in \cite{Vulcano2012EstimatingData}  they  propose an EM algorithm for estimating the lost demand as well as the  model parameters, while in \cite{Lee2016EstimatingModel}  the authors extend this work  to a nested-MNL choice model. 

The EM algorithm is a powerful technique that is adopted in this framework in order to deal with missing information.  In particular, when an item is not present in the offer set, it is not possible to be selected by the customer, and as a consequence it is not feasible to obtain an observation at that point. However, the main difference with more traditional missing data problems is the fact that the data is not missing at random, since their presence/absebce from the assortment is embedded in its selection. A naive technique, when there are missing observations, is to ignore this fact and work with  aggregate data. However,   using simulation experiments, in \cite{Bruno2008ResearchAvailability} the authors concluded that treating the data set as complete can lead to incorrect estimates for the customer choices. 

Another way to deal with missing data is to impute the missing information using what is available. In a slightly different context and  for a demand model that accounts for stock-outs, in \cite{Musalem2010StructuralOut-of-Stocks} they   propose a simulation method  to approximate the set of products that are available to the customer (on the shelves) and impute the missing information  using aggregate sales data . An additional feature of their method is that it updates periodically, which gives the scheme an adaptive flavor. In similar spirit of periodic updates, in \cite{Conlon2013DemandAvailability}, they  develop an  EM algorithm to capture the changes in the offer sets between two consecutive updates. However, the proposed approach suffers from the exponential number of parameters involved in the steps of the algorithm which limits its applicability.  In a dynamic framework that also accounts for the time that a stock-out happens, the authors in \cite{Jain2015DemandNeed} investigate the information gained from knowing the time that an item is missing and its effect on product sales.

Apart from the statistical challenges that are caused due to the variation of the offer sets,  missing products have significant  implications on the customer's choices and behavior when the desirable item is not present. From the seller's point of view,  the customer will ideally substitute this item with another in the proposed assortment. However, she has always the option to leave the ``store'', which is the worst case scenario. In this direction, in \cite{Kalyanam2007DeconstructingContribution} they  investigate the effects of unavailability of items on so-called category contribution, which is the impact of the presence of an individual item on the sales of the whole category. Based on simulations, they conclude that the items with the most dominant category impact are not necessarily the top sellers.

In this paper, we propose an extension to the MNL model that takes into account the interaction between the available and missing product in an assortment, the Halo MNL. For this new model, we are interested in establishing conditions for identifiability as well as suitable conditions for partial identifiability, in order to build robust estimation methods. Under strict identifiability conditions, we use maximum likelihood to estimate the model parameters for which we derive closed formulas. 

We also present both simulation experiments for the performance of our model, as well as a real data example. In particular, we simulate data from a Mixed MNL model, and we fit both the MNL and our model. Although the number of parameters in the Halo-MNL is significantly larger, we find that it has a better fit to the data in terms of AIC/BIC scores, which are metrics that penalize a model for the higher number of parameters. Furthermore, we apply both the Halo-MNL and MNL models to a Hotel dataset,  \cite{Bodea2009Choice-BasedChain}. In this real data example,  not only do we observe a better fit of the Halo-MNL to the data, but we also derive conclusions regarding the effects of missing items on the remaining ones in the assortment.

In the sequel, the structure of our paper is as follows: In Section 3, we describe our proposed model. In Section 4, we discuss the maximum likelihood approach for our model, as well as sufficient identifiability and partial identifiability conditions. In Section 5, we present our numerical results both from simulation experiments and from the fitting of our model to a Hotel dataset. We discuss our main conclusions and future work in Section 6.

\section{Model Description}

Consider a collection of $\mathcal{N} = \{1, \ldots, N\}$ products to be sold over $M$ periods and denote  by $\mathcal{S}_{m}\subset \mathcal{N}$ the set of items offered to the customer in a given period $m\in M$. We assume that, due to limited availability for example, not all products are available  on a given period $m\in M$, which implies that for different periods the assortment sets $\mathcal{S}_{m}$ may be different. Moreover, in a given period $m\in M$, the customer may choose  one item from the assortment or nothing, the so-called no-purchase option.  The offer set $\mathcal{S}_{m}$ is assumed to be known to us in advance, while our observations consist of the customer choice, i.e. purchase, in period  $m$, denoted by  $z_{i}^{(m)}$, $i \in \mathcal{N}$. If $i\notin \mathcal{S}_m$ then trivially  $z_{i}^{(m)}=0$. The aggregate number of items of type $i$ bought in all periods is simply denoted by $z_{i} = \sum_{m\in M} z_{i}^{(m)}$.

In the  literature,  the most popular model to describe the customer choices is the multinomial logit model (MNL). According to this model, each customer has a preference for each item denoted by $v_i$, $i=1,\ldots, N$, with $v_0=1$ being the normalized no-purchase preference.  Therefore, the probability that a customer selects product $j\in \mathcal{S}_m$, when $\mathcal{S}_m$ is offered,  is defined as 
\begin{equation}\label{eq:classicMNL}
P_j(\mathcal{S}_m, v) = \frac{e^{v_j}}{\sum_{i\in \mathcal{S}_m} e^{v_i} +1},\quad \text{with   }\quad
P_j(\mathcal{S}_m, v) = 0, \text{ when } j \notin \mathcal{S}_m
\end{equation}
where $v$ is the vector of customer preferences for the $N$ items.

In this paper,  we propose a model that captures the ``halo'' effect, which is defined as the effect  of one product being present or absent in the offer set on the customer's preference/action. Therefore,  we decompose  the customer's preference for item $j$ in two components: \textit{(i)} the baseline preference (i.e. the preference when all items are offered) for item $j$, denoted by $\mu_j$, and \textit{(ii)} the effect of the absence of item $i$ on the preference for item $j$, denoted by  $\alpha_{ij}, \,\forall i,j =1, \ldots, N$, with $\alpha_{ii}=0$. Specifically, we model the term $v_j$ in \eqref{eq:classicMNL} as
\begin{equation}\label{eq:MNL_halo}
v_{j}=x_{j}^{(m)}\mu_{j}+\sum_{i=1}^{N}x_{j}^{(m)}(1-x_{i}^{(m)} )a_{ij},
\end{equation}
where  $x^{(m)}=(x_{1}^{(m)},x_{2}^{(m)},...,x_{n}^{(m)})$ is the availability vector during period $m$  with $x_{i}^{(m)}= 1$, if $i\in \mathcal{S}_m$ and 0, if $i\in\mathcal{S}_{m}$.  Therefore,  the probability of purchasing  item $j$ in period $m$ becomes:
\begin{align*}
P_{j}^{(m)} &=\frac{\exp\left(x_{j}^{(m)}\mu _{j}+\sum_{i=1}^{N}x_{j}^{(m)}\left(1-x_{i}^{(m)}\right)\alpha_{ij}\right)}{1+\sum_{j=1}^{N}\exp{\left(x_{j}^{(m)}\mu _{j}+\sum_{i=1}^{N}x_{j}^{(m)}\left(1-x_{i}^{(m)}\right)\alpha_{ij}\right)}},\\
 P_{0}^{(m)} &=\biggl[1+\sum_{j=1}^{N}\exp \Bigl(x_{j}^{(m)}\mu_{j}+\sum_{i=1}^{N}x_{j}^{(m)}\Bigl(1-{x}_{i}^{(m)}\Bigr)  \alpha_{ij}\Bigr)\biggr]^{-1},
\end{align*}
which can also be written as 
\begin{equation}\label{eq:probabilities}
P_{j}^{(m)}=\frac{\exp\left(\mu _{j}+\sum_{i\notin {S}_{m}}\alpha_{ij}\right)}{1+\sum_{j\in S_{m}}\exp\left(\mu _{j}+\sum_{i\notin S_{m}}\alpha_{ij}\right)},\quad P_{0}^{(m)}=\biggl[1+\sum_{j\in S_{m}}\exp\Bigl(\mu _{j}+\sum_{i\notin S_{m}}\alpha_{ij}\Bigr) \biggr]^{-1}
\end{equation}
with the preference of the no-purchase item be normalized to 1.

\section{Parameter Estimation}

The goal in this section is to describe an estimation procedure and discuss identifiability conditions for the model parameters $\mu_j$, $\alpha_{ij}$, for $i,j=1,\ldots, n$. First, let us  define the matrix containing all parameters to estimate, $\Theta$, as follows:
\begin{equation}
\Theta:= 
\begin{bmatrix}
\mu_1 & \mu_2 & \ldots & \mu_{N-1} & \mu_N \\
0 & \alpha_{12} & \ldots & \alpha_{1,N-1} & \alpha_{1N}\\
\alpha_{21} & 0 & \ddots & \dots& \vdots\\
\alpha_{31} & \alpha_{32} & \ddots & \ddots& \vdots\\
\vdots & \vdots & \ddots & 0 &\alpha_{N-1,N}\\
\alpha_{N1} & \alpha_{N2} & \ldots & \alpha_{N,N-1} & 0
\end{bmatrix} = 
\begin{bmatrix}
\mathbf{\mu}\\
\mathbb{A}
\end{bmatrix},
\end{equation}
where $\mathbf{\mu}$ is the vector containing  $\mu_j$, $j=1,\ldots, N$  and $\mathbb{A}=[\alpha_{ij}]$, with $i,j=1,\ldots,N$.

\subsection{Likelihood Function}

Since the underlying model is an MNL-type model, the likelihood function is as follows:
\begin{equation}\label{eq:likelihood}
L(\mu, \mathbb{A}) = \prod_{m=1}^{M}   \left(P_{0}^{(m)}\right)^{ z_{0}^{(m)}}  \left(  \prod_{j\in \mathcal{S}_m} \left(P_{j}^{(m)}\right)^{ z_{j}^{(m)}}  \right).
\end{equation}
If we take the logarithm and plug-in the expression, \eqref{eq:probabilities}, for the probabilities $P_j^{(m)}$ we obtain the following log-likelihood:
\begin{align*}
\ell(\mu, \mathbb{A}) &:= \log L(\mu, \mathbb{A}) = \sum_{m=1}^{M}\left( z_{0}^{(m)} \;\log P_{0}^{(m)}  + \sum_{j\in S_m} z_{j}^{(m)} \log P_{j}^{(m)} \right)\\
%&=  \sum_{m=1}^{M}\left\{- \sum_{j \in S_m\cup \{0\}} z_{j}^{(m)}  \;\log\left(1+ \sum_{k\in S_{m}}\exp\left( \mu _{k}+\sum_{\notin S_{m}}a_{ik}\right) \right)  + \sum_{j\in S_m} z_{j}^{(m)}  \left\{\mu _{j}+\sum_{i\notin {S}_{m}}a_{ij}\right\} \right\}\\
&=  \sum_{m=1}^{M}\left\{-\kappa^{(m)}  \;\log\biggl(1+  \sum_{k\in S_{m}}\exp\Bigl( \mu _{k}+\sum_{i\notin S_{m}}a_{ik}\Bigr) \biggr)  +  \sum_{j\in S_m} z_{j}^{(m)}  \Bigl(\mu _{j}+\sum_{i\notin {S}_{m}}a_{ij}\Bigr ) \right\},
\end{align*}
where we set $\kappa^{(m)}:=\sum_{j\in S_m \bigcup \{0\}}z_{j}^{(m)}$.
Taking the derivatives with respect to the parameters, 
\[ \frac{\partial \ell (\mu, \mathbb{A}) }{\partial \mu_p} = 0,\;\;\; \frac{\partial \ell (\mu, \mathbb{A}) }{\partial \alpha_{qp}} = 0, \;\; \forall q, p = 1,\ldots, N,\; \;q\neq p,
\]
we obtain the following system of equations ($\forall  q, p = 1,\ldots, N,\; \;q\neq p$):
\begin{equation}\label{eq:derivativeS}
\sum_{m=1}^{M} \left\{z_p^{(m)} \biggl( 1+ \sum_{k\in \mathcal{S}_m} \exp\Bigl( \mu_k + \sum_{i\notin \mathcal{S}_m} \alpha_{ik}\Bigr) \biggr) - \kappa^{(m)} \exp \Bigr ( \mu_p + \sum_{i\notin \mathcal{S}_m} \alpha_{ip}\Bigr) \right\} = 0
\end{equation}
As one would expect,   \eqref{eq:derivativeS} heavily depends on the structure of the offer sets $\mathcal{S}_{m}$, $m\in M$. In order to better understand the dependence of the likelihood on the offer sets, we define an intermediate matrix $\mathbb{Q}$ that specifies the relationship between the parameters and the offer sets. Specifically, the elements of the $\mathbb{Q}$-matrix indicate the presence or absence of an item in the offered set in period $m$, with $q_{mj} = 1$, if $j\in \mathcal{S}_{m}$ and 0 if $j \notin  \mathcal{S}_{m}$:
\begin{equation}
\mathbb{Q}:= \begin{bmatrix}
q_{11} & q_{12} & \ldots & q_{1N}\\
q_{21} & q_{22} & \ldots & q_{2N}\\
\vdots & \vdots & \ddots & \vdots\\
q_{M1} & q_{M2} & \ldots & q_{MN}
\end{bmatrix}.
\end{equation}
Therefore, the derivative of the log-likelihood can be written as: 
\begin{align}\label{eq:derivative}
\ell^*(\mathbf{\mu},\mathbb{A}) &= \sum_{m=1}^{M} \biggl\{ z_p^{(m)} \biggl( 1+ \sum_{k=1}^{N}  \exp\Bigl( \mu_k \;q_{mk} + \sum_{i=1}^{N} \alpha_{ik}\;(1-q_{mi})\Bigr) \biggr) -  \kappa^{(m)} \exp \Bigr ( \mu_p q_{mp} + \sum_{i=1}^{N}  \alpha_{ip} \;(1-q_{mi})\Bigr) \biggr\}\notag \\
& = 0.
\end{align}

\subsection{Identifiability Conditions}

Due to the nature of the data and the structure of the $\mathbb{Q}$-matrix, it is not always possible to uniquely identify all the model parameters. The most natural example is when, for example, item $i$ is always in the offer set. Then, we are not able to observe its effect when missing on the other items, which is quantified via parameter $\alpha_{ij}, j=1,\ldots, i-1, i+1, \ldots, N$ and $i$ fixed . This, of course, does not mean that there is not an effect, but our data do not allow us to estimate it. 
Mathematically, identifiability is defined as follows:
 \begin{definition}
A set of parameters $\Theta$ is \textit{identifiable} if the following holds:
 \begin{equation}\label{eq:ident}
 P_{j}^{(m)} = P\left(Z = z^{(m)} | \mathbb{Q}, \Theta\right) = P_{j}^{(m)} = P\left(Z = z^{(m)} | \mathbb{Q}, \bar{\Theta}\right) \Leftrightarrow \Theta = \bar{\Theta},
 \end{equation}
 where $z^{(m)}$ are the data in a  period $m$, $\Theta$ is the parameter matrix and $\mathbb{Q}$ is the matrix that indicated the presence/absence of an item in the offer set.
 \end{definition}
The goal in this paper is not to attack this question in its greater generality. However, we want to provide \textit{sufficient identifiability conditions} that will guarantee the uniqueness of the solution of \eqref{eq:derivative}. Therefore, consider the two following cases:

\begin{itemize}
\item[\textbf{(C1)}] Assume that the $\mathbb{Q}$-matrix takes the following form after row-swapping:
\[\mathbb{Q} = \begin{bmatrix}
&\mathbbm{1}\\
&\mathbbm{1}\\
& \mathbbm{1} - \mathbb{I}\\
& \mathbbm{1} - \mathbb{I}\\
& \mathbb{Q}'
\end{bmatrix}.
\]
where $\mathbb{Q}'$ has a similar structure as $\mathbb{Q}$.

\item[\textbf{(C2)}] Suppose $\mathbb{Q}$ has the following structure after row-swapping:
\[\mathbb{Q} = \begin{bmatrix}
&\mathbbm{1}\\
&\mathbbm{1}\\
& \mathbb{U}\\
& \mathbb{U}\\
& \mathbb{Q}'
\end{bmatrix},
\]
where $\mathbb{U}$ is an upper-triangular matrix of ones and  $\mathbb{Q}'$ has a similar structure as $\mathbb{Q}$.
\end{itemize}

\noindent \textit{Remarks:} 
Condition \textbf{(C1) } is slightly stronger than necessary, but we need to guarantee that each offer set appears at least twice so that we have enough information to identify the model parameters.
 Condition \textbf{(C1)} is interpreted as having periods where all items are offered and periods where only one item is missing. In addition, every item must be missing in at least one period.
  Condition \textbf{(C2)} is interpreted as having periods where all items are offered and then periods where sequentially items are not in the presented offer sets. For example, that would mean that after period $m_1$, item $i$ is out-of stock, so it will not be offered in the subsequent period either.

\begin{theorem}
Under the Halo MNL model \eqref{eq:classicMNL} and \eqref{eq:MNL_halo}, if condition \textbf{(C1)} is satisfied, then $\Theta$ is identifiable.
\end{theorem}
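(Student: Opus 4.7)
The plan is to exploit the block structure of $\mathbb{Q}$ under condition \textbf{(C1)} by decoding parameters one offer-set type at a time: first recover the baseline preferences $\mu$ from the all-items-offered periods, then recover the halo rows of $\mathbb{A}$ from the single-item-missing periods. Throughout I assume two parameter matrices $\Theta$ and $\bar{\Theta}$ induce identical choice distributions in every period specified by $\mathbb{Q}$, and the goal is to deduce $\Theta = \bar{\Theta}$.

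First I would focus on the rows of $\mathbb{Q}$ equal to $\mathbbm{1}$, in which every item is offered. From \eqref{eq:probabilities}, the halo terms vanish entirely because $S_m = \mathcal{N}$, so the purchase probabilities collapse to the classical MNL form
\[
P_j^{(m)} = \frac{e^{\mu_j}}{1 + \sum_{k=1}^N e^{\mu_k}}, \qquad P_0^{(m)} = \frac{1}{1 + \sum_{k=1}^N e^{\mu_k}}.
\]
The key is the fixed no-purchase normalization $v_0 = 1$, which breaks the usual shift invariance of the logit: forming the log-odds $\log(P_j^{(m)}/P_0^{(m)}) = \mu_j$ recovers each $\mu_j$ directly from the observed probabilities. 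Equality of distributions under $\Theta$ and $\bar{\Theta}$ therefore forces $\mu_j = \bar{\mu}_j$ for every $j = 1, \ldots, N$.

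Second, I would turn to the $N$ rows coming from $\mathbbm{1} - \mathbb{I}$. The row with availability vector $\mathbbm{1} - e_i$ corresponds to a period in which only item $i$ is absent, so for every $j \neq i$ equation \eqref{eq:probabilities} gives
\[
P_j^{(m)} = \frac{\exp(\mu_j + \alpha_{ij})}{1 + \sum_{k\neq i} \exp(\mu_k + \alpha_{ik})}.
\]
Applying the same log-odds-against-no-purchase trick identifies the compound coefficient $\mu_j + \alpha_{ij}$, and subtracting the already-identified $\mu_j$ yields $\alpha_{ij}$ for every $j \neq i$. Ranging $i$ over $1, \ldots, N$ exhausts all off-diagonal entries of $\mathbb{A}$, and the diagonal entries are zero by definition, so $\mathbb{A}$ is fully pinned down. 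Together with step one this gives $\Theta = \bar{\Theta}$.

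The recursive block $\mathbb{Q}'$ is not actually needed to close the argument: the $\mathbbm{1}$ and $\mathbbm{1} - \mathbb{I}$ blocks already carry enough information to invert the map from parameters to probabilities. Rather than a technical obstacle, the substantive point to be careful about is the role of the normalization $v_0 = 1$; without it, each MNL marginal would only determine $\mu$ (and analogously $\mu + \alpha_{i\cdot}$) up to an additive constant, the step-one and step-two shifts could disagree, and the argument would fail. Finally, the remark that each such row must appear \emph{twice} is not used in establishing distributional identifiability — one copy of each offer set suffices — but is retained for the downstream estimation section where two observations per offer pattern are needed to avoid degenerate likelihood maximizers.
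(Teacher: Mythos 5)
Your proof is correct and rests on the same decomposition and the same key computation as the paper's: use the all-items-offered periods to read off $\mu_j$ via the log-odds against the no-purchase option, then use the period in which only item $i$ is missing to read off $\mu_j + \alpha_{ij}$ and subtract. The difference is the level at which you argue. The paper works at the sample level: it splits the score equations \eqref{eq:derivative} according to the block structure of $\mathbb{Q}$ and exhibits the unique closed-form solution \eqref{eq:mle}, so identifiability is established as uniqueness of the maximum-likelihood solution, with the estimators $\hat{\mu}_p = \ln\bigl(\sum_m z_p^{(m)}/\sum_m z_0^{(m)}\bigr)$ and $\hat{\alpha}_{ip}$ being exactly the empirical analogues of your population log-odds identities. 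You instead verify the distributional definition \eqref{eq:ident} directly, showing the map from $\Theta$ to the collection of choice distributions is injective; this is arguably the more faithful reading of the stated definition of identifiability, and it lets you observe correctly that neither the duplication of each offer set nor the recursive block $\mathbb{Q}'$ is needed for identifiability itself (the paper's own remark concedes that the duplication in \textbf{(C1)} is ``slightly stronger than necessary'' and is there for estimation, e.g.\ to keep the count ratios in \eqref{eq:mle} well defined). Your emphasis on the normalization $v_0 = 1$ as the device that breaks the logit shift invariance is also the right point to flag; the paper uses it implicitly by anchoring every estimator to $z_0^{(m)}$.
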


\begin{proof}
Given the structure of the $\mathbb{Q}$-matrix, let us rewrite the derivative of the log-likelihood, by splitting the sum in periods  $m=1,\ldots, M_1$ where everything is offered (first part of the $\mathbb{Q}$-matrix) and periods $m=M_1+1,\ldots, M$ 
 where one item is missing (second part of the $\mathbb{Q}$-matrix):
\begin{align*}
\ell^*(\mathbf{\mu},\mathbb{A}) 
%&= \sum_{m=1}^{M_1} \biggl\{ z_p^{(m)} \biggl( 1+ \sum_{k=1}^{N}  \exp\Bigl( \mu_k \;q_{mk} + \sum_{i=1}^{N} \alpha_{ik}\;(1-q_{mi})\Bigr) \biggr) -  \kappa^{(m)} \exp \Bigr ( \mu_p q_{mp} + \sum_{i=1}^{N}  \alpha_{ip} \;(1-q_{mi})\Bigr) \biggr\}\\
%& +\; \sum_{m=M_1+1}^{M_2} \biggl\{ z_p^{(m)} \biggl( 1+ \sum_{k=1}^{N}  \exp\Bigl( \mu_k \;q_{mk} + \sum_{i=1}^{N} \alpha_{ik}\;(1-q_{mi})\Bigr) \biggr) -  \kappa^{(m)} \exp \Bigr ( \mu_p q_{mp}+ \sum_{i=1}^{N}  \alpha_{ip} \;(1-q_{mi})\Bigr) \biggr\}\\
& = \sum_{m=1}^{M_1} \biggl\{ z_p^{(m)} \Bigl( 1+ \sum_{k=1}^{N}  e^ {\mu_k}\Bigr) -  \kappa^{(m)} e^{\mu_p}  \biggr\}  +\; \sum_{m=M_1+1}^{M} \biggl\{ z_p^{(m)} \biggl( 1+ \sum_{k=1}^{N}  \exp\Bigl( \mu_k \;q_{mk} + \sum_{i=1, i\neq k}^{N} \alpha_{ik}\Bigr) \biggr) \\
&\quad -  \kappa^{(m)} \exp \Bigr ( \mu_p\; q_{mp} + \sum_{i=1, i\neq p}^{N}  \alpha_{ip} \Bigr) \biggr\}.
\end{align*}
Identifiability, in this case, is guaranteed by solving this system of equations uniquely. Straightforward calculations lead to the following estimators:
\begin{align}\label{eq:mle}
\hat{\mu} &= \ln \left(\dfrac{\sum_{m=1}^{M_1} z_p^{(m)}}{\sum_{m=1}^{M_1} z_0^{(m)}} \right),  \forall p=1,...,N\\
\hat{a_{ip}} &= \ln \left(\dfrac{\sum_{m\in S_i} z_p^{(m)}}{\sum_{m\in S_i} z_0^{(m)}} \right)- \hat{\mu_p}, \forall i,p=1,...,N 
\end{align}

\end{proof}

Now, consider the case that we restrict the parameter space such that $\alpha_{ij}=0$ when $i<j$. This means that certain products only appear in the complete offer sets and then they are out of stock, so they never appear in an offer set. This creates a specific structure of the $\mathbb{Q}$ matrix which allows us to obtain the following  sufficient conditions for identifiability:

\begin{theorem}\label{thm:iden-tri}
Under the Halo MNL model \eqref{eq:classicMNL} and \eqref{eq:MNL_halo},  when the matrix $\mathbb{A}$ is upper triangular and the $\mathbb{Q}$ matrix satisfies Condition \textbf{(C2)}, then the model is identifiable.
\end{theorem}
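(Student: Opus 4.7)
The plan is to follow the template of the proof of Theorem 1: partition the MLE equation~\eqref{eq:derivative} according to the block decomposition of $\mathbb{Q}$ under~\textbf{(C2)}, first identify $\mu$ from the initial all-ones block, and then recover the $\alpha_{ij}$ by peeling off the rows of $\mathbb{U}$ one at a time. The new ingredient relative to Theorem 1 is that the triangular sparsity of $\mathbb{A}$ must be aligned with the triangular orientation of $\mathbb{U}$, so that each successive row of $\mathbb{U}$ introduces exactly one new row of $\alpha$-parameters.

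Concretely, I would proceed as follows. First, from the initial pair of all-ones rows the argument in Theorem 1 applies verbatim and yields $\hat{\mu}_p = \ln\bigl(\sum_{m} z_p^{(m)}/\sum_{m} z_0^{(m)}\bigr)$, pinning down $\mu$ uniquely. Second, I would induct on the row index $m$ of $\mathbb{U}$: in row $m$ of $\mathbb{U}$, items $1,\dots,m-1$ are absent and items $m,\dots,N$ are present, so for an offered item $p\geq m$ the preference reduces to $v_p^{(m)} = \mu_p + \sum_{i=1}^{m-1}\alpha_{ip}$, and the triangular sparsity of $\mathbb{A}$ is automatically compatible because $i < m \leq p$ throughout. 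Assuming inductively that $\mu_p$ and $\alpha_{ip}$ for $i\leq m-2$ have already been identified, the row-$m$ analogue of equation~\eqref{eq:derivativeS} yields a unique log-ratio expression for each $\alpha_{m-1,p}$ with $p\geq m$. Iterating $m=2,\dots,N$ recovers every free entry of $\mathbb{A}$. The residual block $\mathbb{Q}'$ has similar structure by assumption and may be handled by recursion, but since the two preceding steps already exhaust the parameter space, $\mathbb{Q}'$ only contributes an over-determined consistency condition.

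The main obstacle is the bookkeeping required to guarantee that at each inductive step the number of new MLE equations exactly matches the number of new unknowns: row $m$ of $\mathbb{U}$ contributes $N-m+1$ equations (one per offered item) and introduces exactly $N-m+1$ new parameters $\{\alpha_{m-1,p}\}_{p\geq m}$. Any misalignment between the sparsity of $\mathbb{A}$ and the orientation of $\mathbb{U}$ would break this count and destroy identifiability---this is precisely why the triangular assumption on $\mathbb{A}$ is essential. A secondary concern is to verify that the doubled rows in each block provide enough observational depth for the MLE equation to admit a unique interior solution at each step, which is handled exactly as in Theorem 1.
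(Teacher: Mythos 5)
Your proposal follows essentially the same route as the paper: identify $\mu$ from the all-ones periods, then exploit the nested offer sets of $\mathbb{U}$ to peel off the $\alpha_{ip}$ row by row, and your inductive step telescopes exactly to the paper's closed-form estimator $\hat{\alpha}_{ip} = \ln\bigl(\sum_{m\in S_i} z_p^{(m)}/\sum_{m\in S_i} z_0^{(m)}\bigr) - \ln\bigl(\sum_{m\in S_{i-1}} z_p^{(m)}/\sum_{m\in S_{i-1}} z_0^{(m)}\bigr)$. You in fact supply more detail (the equation/unknown count per row of $\mathbb{U}$) than the paper's proof, which simply states the formulas.
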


\begin{proof}

Following the same process as in Theorem 1, we have that 
\begin{align}
\hat{\mu} &= \ln \left(\dfrac{\sum_{m=1}^{M_1} z_p^{(m)}}{\sum_{m=1}^{M_1} z_0^{(m)}} \right), \forall p=1,...,N \\
\hat{a_{ip}} &= \ln \left(\dfrac{\sum_{m\in S_i} z_p^{(m)}}{\sum_{m\in S_i} z_0^{(m)}} \right) - \ln \left(\dfrac{\sum_{m\in S_{i-1}} z_p^{(m)}}{\sum_{m\in S_{i-1}} z_0^{(m)}} \right), \forall i,p=1,...,N, i<p
\end{align}

\end{proof}

A more general criterion for parameter identifiability has been introduced in  the literature, that of \textit{local identifiability}.  According to it, in a local region of the parameter space, there is a unique $\theta_0$ that fits some specified body of data. More formally,
\begin{definition}
A set of parameters $\Theta$ is \textit{locally identifiable}, if there exists a neighborhood $\tilde{\Theta} \subset \Theta$ such that \eqref{eq:ident} is satisfied.
\end{definition}

\begin{proposition}
Under the halo MNL model \eqref{eq:classicMNL} and \eqref{eq:MNL_halo},  when the  $\mathbb{Q}$ matrix satisfies Condition \textbf{(C2)}, then the model is locally identifiable.
\end{proposition}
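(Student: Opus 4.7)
My plan is to establish local identifiability through the standard Rothenberg criterion: $\Theta_{0}$ is locally identifiable whenever the Jacobian of the map $\Theta \mapsto \{P_{j}^{(m)}(\Theta)\}_{j\in\mathcal{S}_{m},\,m}$ has full column rank at $\Theta_{0}$, equivalently the Fisher information $I(\Theta_{0})$ is nonsingular. Once full column rank is verified, the inverse function theorem provides a neighborhood $\tilde{\Theta}$ on which no other parameter vector reproduces the same choice probabilities, which is exactly the content of \eqref{eq:ident}.

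The first step is to exploit the nested structure of the offer sets imposed by Condition \textbf{(C2)}. After row-swapping one obtains a decreasing chain $\mathcal{N} \supset \mathcal{S}^{(1)} \supset \mathcal{S}^{(2)} \supset \cdots$ together with the repeated full-offer periods. I would factor the Jacobian through the reparameterization
\[
\Theta \;\longmapsto\; \bigl\{\eta_{j}^{(m)}\bigr\}, \qquad \eta_{j}^{(m)}=\mu_{j}+\sum_{i\notin \mathcal{S}_{m}}\alpha_{ij},
\]
and use the fact that the subsequent map from linear predictors $\eta$ to softmax probabilities $P$ is a local diffeomorphism at any interior point (a standard MNL fact, whose Jacobian on each period has rank $|\mathcal{S}_{m}|$). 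It therefore suffices to show that the affine map $\Theta \mapsto \eta$ has full column rank on the parameters that genuinely enter under \textbf{(C2)}.

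The second step is to read off that linear map level by level and establish invertibility via a triangular cascade, in the infinitesimal counterpart of the log-ratio inversion used in Theorem~\ref{thm:iden-tri}. The full-offer rows give $\eta_{j}=\mu_{j}$ for every $j$, which pins down $\mu$. The rows for $\mathcal{S}^{(1)}$ (one item missing) then give $\eta_{j}=\mu_{j}+\alpha_{ij}$ for $j\in\mathcal{S}^{(1)}$, which, together with the already-identified $\mu$, uniquely determine the corresponding batch of $\alpha_{ij}$'s. Iterating along the chain, once $\mu$ and the halo parameters from previous levels are fixed, the incremental information at $\mathcal{S}^{(k)}$ determines the new batch of halo parameters. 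Since $\mathcal{N}$ appears at least twice, the softmax local-diffeomorphism step is unobstructed, so each diagonal block is invertible and the Jacobian of $\Theta \mapsto \eta$ is block lower-triangular with nonsingular diagonal blocks.

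The main obstacle is the bookkeeping of which $\alpha_{ij}$ actually enter the likelihood at each level: $\alpha_{ij}$ contributes to period $m$ only when $i\notin \mathcal{S}_{m}$ and $j\in \mathcal{S}_{m}$, and the nested chain in \textbf{(C2)} restricts the set of observable halo parameters to a block-triangular pattern (with indices $(i,j)$ in which $i$ is dropped before $j$). One must align the columns of the Jacobian (parameters active at level $k$) with the rows (periods and products in $\mathcal{S}^{(k)}$) so that the block-triangular structure is manifest, and verify that the only directions in $\Theta$-space that leave every $\eta_{j}^{(m)}$ fixed are precisely the directions in which the $\alpha_{ij}$'s that never enter the likelihood may vary. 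Restricted to the submanifold where those inactive coordinates are held at their values in $\Theta_{0}$, the Jacobian is nonsingular, giving local identifiability in the sense of the definition.
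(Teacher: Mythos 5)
Your argument is correct in substance but packaged differently from the paper's. The paper's proof here is essentially a pointer to Theorem~\ref{thm:iden-tri}, which proceeds constructively: it writes down closed-form estimators level by level --- $\hat{\mu}_p$ from the full-offer periods and $\hat{\alpha}_{ip}$ as a telescoping difference of log-ratios $\ln\bigl(\sum_{m\in S_i} z_p^{(m)}/\sum_{m\in S_i} z_0^{(m)}\bigr)$ between consecutive levels of the nested chain --- and identifiability is read off from the uniqueness of the solution to that system. You instead invoke the Rothenberg full-rank criterion and factor the Jacobian of $\Theta\mapsto\{P_j^{(m)}\}$ through the softmax local diffeomorphism and a block-lower-triangular linear map $\Theta\mapsto\eta$; your triangular cascade is exactly the infinitesimal version of the paper's telescoping inversion. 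What your route buys is generality (no explicit inverse needs to be exhibited, and the criterion is robust to perturbations of the design), at the cost of delivering only a local conclusion where the constructive argument gives outright uniqueness of the active parameters. One point where you are more careful than the paper: you correctly flag that the $\alpha_{ij}$ which never enter any $\eta_j^{(m)}$ under \textbf{(C2)} (those for which $i$ is dropped after $j$, or never dropped) are flat directions of the likelihood, so full column rank can only hold after restricting to the active coordinates; the paper silently absorbs this into the upper-triangularity assumption on $\mathbb{A}$ in Theorem~\ref{thm:iden-tri} and into the word ``locally'' here. Your closing step --- fixing the inactive coordinates and proving nonsingularity on the remaining submanifold --- is the right way to make the proposition precise, though the alignment of rows and columns that you defer to ``bookkeeping'' is where all the actual content of the proof lives and should be written out.
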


\begin{proof}
The steps for the proof here are similar to these of Theorem \ref{thm:iden-tri}.

\end{proof}

In practice, this implies that by using a subset of the data that satisfies the desirable structure of the $\mathbb{Q}$ matrix, we are able to uniquely estimate the model parameters in that region of the parameter space.

\section{Numerical Examples \& Application to Real Data}

The goal of this section is to evaluate the performance of the proposed model. First, we present a simulated example to study the convergence of the error estimates. Second, we fit both  the MNL and Halo MNL models   in a hotel chain dataset, \cite{Bodea2009Choice-BasedChain}.

\subsection{Simulation Experiments}

In this section we provide a simulation experiment to study the performance of our parameter estimates. Thus, we simulate the model with parameters $\mathbb{A}$ and $\mu$, as defined in the Appendix. We consider two cases: in the first one, we assume that the $\mathbb{Q}$-matrix satisfies condition \textbf{(C1)}. Therefore, we consider $T$ periods of different sizes $T=100, 200, 400, 800, 1600, 3200$, where for each period we generated 20 replicates letting the total sample size be 10,000. 

Since the generated data satisfy \textbf{(C1)}, we use the closed formulas provided in \eqref{eq:mle} to compute the maximum likelihood estimators. To quantify the performance of the estimators numerically, we compute the absolute relative error, defined as $\epsilon =  |\hat{\theta} - \theta|/\theta$, where here $\theta$  is notation for a generic parameter and $\hat{\theta}$ is the corresponding  estimator.

For illustration purposes, we present the convergence of the absolute error for parameters $\mu_3$ and $\alpha_{43}$ in Figure \ref{fig:error}.

\begin{center}
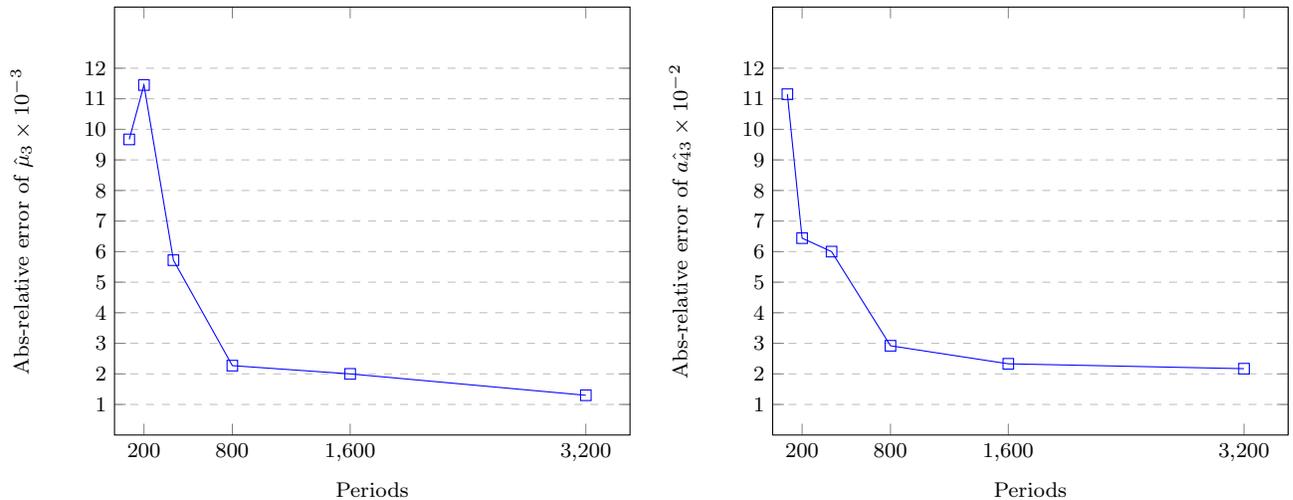
\begin{figure}[h!]\label{fig:error}
\begin{tabular}{cc}
\begin{tikzpicture}
\begin{axis}[
    xlabel={Periods},
    ylabel={Abs-relative error of  ${\hat{\mu }_{3}}\times {{10}^{-3}}$},
    xmin=0, xmax=3500,
    ymin=0, ymax=14,
    xtick={200,800,1600,3200},
    ytick={1,2,3,4,5,6,7,8,9,10,11,12},
    %legend pos=north west,
    ymajorgrids=true,
    grid style=dashed,
]
 
\addplot[color=blue, mark=square,] coordinates {
    (100,9.67)(200,11.45)(400,5.72)(800,2.27)(1600,2.00)(3200,1.30)};
  %  \legend{A verage of Error}
\end{axis}
\end{tikzpicture}
&

\begin{tikzpicture}
\begin{axis}[
    xlabel={Periods},
    ylabel={Abs-relative error of  $\hat{{a}_{43}}\times{{10}^{-2}}$},
    xmin=0, xmax=3500,
    ymin=0, ymax=14,
    xtick={200,800,1600,3200},
    ytick={1,2,3,4,5,6,7,8,9,10,11,12},
    %legend pos=north west,
    ymajorgrids=true,
    grid style=dashed,
]
 
\addplot[color=blue, mark=square,] coordinates {
    (100,11.15)(200,6.44)(400,6.00)(800, 2.92)(1600,2.33)(3200,2.17)};
 %   \legend{Average of Error}
 
\end{axis}
\end{tikzpicture}

\end{tabular}
\caption{Average of Absolute Relative Error as a function of Periods  for parameters $\mu_3$ and $\alpha_{43}$.}
\end{figure}
\end{center}

We are also interested in comparing the performance of Halo-MNL relative to MNL in a simulated environment. In order to make the comparison as fair as possible, we did not simulate data from one or the other model, but we chose to work with the Mixed MNL (MMNL). The MMNL model  is a generalized version of MNL that can be thought as a generator of both models into consideration.

The MMNL model, \cite{McFaddenTrain}, is defined as a MNL model with random coefficients drawn from  a cumulative distribution function defined by $G$ as follows:
\[P_{S}(i|x, \theta) = \int \left( e^{x_i v}/ \sum_{j\in S} e^{x_j v} \right) \cdot G(dv; \theta),\]
where $S$ is the offer set, $x_i$ is as defined before (presence or absence of an item from the offer set), $v$ a vector of random parameters and $\theta$ a vector of parameters characterizing the mixing distribution $G$, denoting the heterogeneity in a population of MNL customers. 
 In our simulation, we assume that there are three types of MNL customers, with fractions  0.2, 0.5, 0.3 and we assume that they arrive at a certain Poisson rate. %We simulated the parameters of the MNL customers as draws from a uniform, $U[0,1]$.

 Since the Halo MNL model includes a second layer of parameters, it is expected that it will have a better performance that the MNL, by capturing the interacting effects of the products, leading to a higher likelihood function score that the MNL. On the other hand, one loses several degrees of freedom, when estimating a larger number of parameters, which eventually reduces the model efficiency due to overfitting.

Therefore, as a metric to compare both models, we use the AIC (Akaike Information Criterion) and BIC (Bayesian Information Criterion) defined as follows:
\[
\text{AIC}  = -2 \;\log \hat{L} + 2\;d,\qquad
\text{BIC}  = -2 \;\log \hat{L} + d\; log n,
\]
where $\hat{L}$ is the maximum of the likelihood function for the model under consideration, $d$ is the number of  parameters to be estimated and $n$ is the sample size. Both criteria are popular metrics for model selection that penalize overfitting (i.e. introducing more parameters than necessary) compared to just the likelihood score. The BIC score, in particular, is  stricter than the AIC, since the penalty term $(d\; \log n)$  depends on the size of the available data. 

In order to make the results easier to interpret, it is  useful to compute the difference of AIC (or BIC) scores for the two models into consideration, e.g. $\Delta \text{AIC} = \text{AIC}_{\text{Model 1}} - \text{AIC}_{\text{Model 2}}$ (or $\Delta \text{BIC}$ respectively). As a rule of thumb, the higher the value of  $\Delta \text{AIC}$  or $\Delta \text{BIC}$, the stronger evidence we have for Model 2. In Table \ref{table:AIC-BIC}, we summarize the results for  $\Delta \text{AIC} =  \text{AIC}_{\text{MNL}} - \text{AIC}_{\text{Halo MNL}} $  and  $\Delta\text{BIC}$, respectively. We observe that in terms of the AIC score, that penalizes only for the number of parameters, the Halo MNL model has in all cases a better fit to the simulated data compared to the MNL. In terms of the BIC score that penalizes for the number of parameters relative to the sample size, for smaller sample sizes the MNL has a better fit to the data, which is not the case when the larger data set. This is expected since in order to fit a model with $n^2$ parameters instead of $n$, a larger dataset is needed.
%
 %To be precise, the reported values are the average value of 100 replications with  number of periods,  Poison arrival rate, and  underlying parameters constant.
%We compute the AIC and BIC scores for both of MNL and Halo-MNL models, given different number of periods and different arrival rates (it is assumed customers are arriving according to a Poisson distribution) for each period. In our data, we set the number of observations equal to total number of customers in all periods. 
%
%
\begin{table}[h!]
\centering
\caption{\textit{Data Generated by MMN}L:
$\Delta$AIC = AIC$_{\text{MNL}}$-AIC$_{\text{Halo MNL}}$ \& $\Delta$BIC  = BIC$_{\text{MNL}}$-BIC$_{\text{Halo MNL}}$}
\begin{tabular}{|c|cc|cc|cc|cc|cc|}
\hline
 & \multicolumn{10}{c|}{{Poisson Arrivals per Period}}\\ 
 & \multicolumn{10}{c|}{}\\
      & \multicolumn{2}{c|}{100}     & \multicolumn{2}{c|}{500} &  \multicolumn{2}{c|}{5000} & \multicolumn{2}{c|}{10000} & \multicolumn{2}{c|}{20000}   \\ 
$T$      & $\Delta$AIC    & $\Delta$BIC     & $\Delta$AIC    &   $\Delta$BIC          &$\Delta$AIC      & $\Delta$BIC &$\Delta$AIC & $\Delta$BIC &$\Delta$AIC & $\Delta$BIC \\ \hline\hline
 50   & 62.88  & -406.49 & 647.80 &   63.45     & 6573.78 & 5822.90 & 12721.63 & 11920.82 & 27213.08  & 26362.35  \\
 100  & 52.17  & -467.17 & 590.34 &  -43.95    & 6968.28 & 6167.35 & 12883.76 & 12032.97 & 25498.83 & 24598.21  \\
 500  & 54.34  & -580.98 & 568.20 &  -182.82   & 6630.19 & 5713.51 & 12838.87 & 11872.31 & 26791.22 &  25774.68 \\
1000  & 49.71  & -635.30 & 614.09 &  -186.59    & 6528.87 & 11500.41 & 12516.90 & 25663.89 & 26730.29 &  22696.51 \\
 5000 & 45.86  & -754.92 & 568.29 &  -348.42    & 6771.89 & 11574.90 & 12707.28 & 23777.10 & 24959.38  & 23761.85 \\
10000 & 23.55  & -827.10 & 527.66 &  -438.95    & 7228.27 & 6095.91 & 13276.66 & 12094.37 & 24923.98 &  23691.78  \\ \hline
\end{tabular}
\label{table:AIC-BIC}
\end{table}

In a next set of simulations, we  generated data from the Halo-MNL model (the parameters used for the simulation can be found in the appendix). Then, we fit both MNL and Halo-MNL models in order to observe their behavior when interaction effects are indeed present. Likelihood, AIC, and BIC scores are summarized in Table \ref{table:Halo-sim}. Based on these, we can conclude that  the Halo-MNL has a lower Likelihood score and better  AIC and BIC scores when the sample size is sufficiently large that allows all parameters be estimated.

\begin{table}
\centering
\caption{Data generated by Halo-MNL:  $\Delta AIC = AIC_{MNL} - AIC_{Halo-MNL}$ \& $\Delta BIC = BIC_{MNL} - BIC_{Halo-MNL}$ $\Delta L = Likelihood_{MNL} - Likelihood_{Halo-MNL}$ }
\begin{tabular}{|c | c c c | c c c| c c c |} 
 \hline
  &  \multicolumn{9}{c}{Mean Poisson Arrivals Per Period} \\
 {} &  \multicolumn{3}{c}{100}  & \multicolumn{3}{c}{500}  & \multicolumn{3}{c|}{10,000}  \\
T  &  $\Delta L$ &  $\Delta AIC$ & $\Delta BIC$ &  $\Delta L$ & $\Delta AIC$ & $\Delta BIC$ &  $\Delta L$ & $\Delta AIC$ & $\Delta BIC$ \\
 \hline\hline
 50    &  -127.7 & 111.3 & -358.3 & -520.9 & 897.7 & 312.6 &  -11254.3 & 22364.6 & 21563.8\\ 
 100   &  -293.1 & 442.1 & -77.9 &  -1135.4 & 2126.8 & 1491.5 & -22704.4 & 45264.7 & 44413.9 \\
 500  & -1138.3 & 2132.6 & 1497.5 & -5593.5 & 11043.0 & 10292.1 & -113825.7 & 227507.3 & 226540.7\\
 1000& -2218.7 & 4293.4 &  3608.7& -11236.9 & 22329.9 & 21529.2 & -227258.4 & 454372.7 & 453356.2\\
 5000  & -11332.6 & 22521.2 & 21720.3 & -56438.5 & 112733 & 111816.3 & -1138958 & 2277771 & 2276639\\
 10000 & -23243.3 & 46342.6 & 45491.9 & -114161.5 & 228178.9 & 227212.3 & -2278759 & 4557375 & 4556193\\ 
 \hline
\end{tabular}
\label{table:Halo-sim}
\end{table}

\subsection{Application to Hotel Data}

In order to study the applicability of the Halo MNL model in a real dataset, we consider the Hotel Data  in \cite{Bodea2009Choice-BasedChain} that includes  room purchases with check-in dates between March 12, 2007 and April 15, 2007 for five different hotels.  For each hotel, the dataset  consists of  attributes, such as  booking and check-in date, customer reward status, room type and room amenities that determine the final price. For our purposes, we use hotel number 3 and we choose `room type' as our `product/item'. After cleaning the data, we have 9 distinct products all summarized in Table \ref{table:data}.

\begin{table}[!h]
\centering
	\caption{Products and their labels}
\begin{tabular}{lc}
\textbf{Room Type} &  \textbf{Label}   \\
King Room 1 Smoking  & Product 1  \\
King Room 3 Non-Smoking  & Product 2 \\
2 Double Beds Room 1 Smoking  & Product 3  \\
2 Double Beds Room 2 Smoking  & Product 4\\
Regular Bed Room 1 Smoking  & Product 5 \\
Regular Bed Room 1 Non-Smoking  & Product 6\\
Standard Room  &  Product 7    \\
King Room 1 Non-Smoking  &  Product 8        \\
2 Double Beds Room 1 Non-Smoking  &  Product 9 \\
\end{tabular}
\label{table:data}
\end{table}

\subsection*{Goodness of Fit for MNL and Halo-MNL}
The goal in this section is to fit both the MNL and Halo MNL models in the hotel data set. We normalize the utility of `Product 9' to be zero in order to be consistent with our model assumptions in Section 4. %which guarantees the assumption that the denominator in XX is non-zero. 
The final dataset consists of 834 periods (offer sets) that satisfy condition $(\mathcal{C}1)$.
We split the dataset in training and testing, considering different percent allocations in order to observe the effect of sample size  in  model performance. Specifically, we consider training sets of sizes 200, 300, 400, 500, and 600  that are randomly selected, with the remaining offer sets being part of the testing set. Due to the random split of the dataset in training and testing, the  sets that we work will only satisfy condition $(\mathcal{C}2)$, which  guarantees partial identifiability. In layman's terms, this translates into the fact that we are only able to estimate parameters for which we have available observations. This is an issue that is more prominent in the Halo MNL model, since we require more data to estimate all the  parameters.

For the Halo-MNL model we use the  maximum likelihood estimates from the previous section, while for the estimation of the MNL parameters we use the Baron solver for a brute force maximization of the likelihood.

\noindent \textit{Remark:} The   Branch-And-Reduce Optimization Navigator (BARON) 17.8.9 solver  is used (\cite{Sahinidis2015BARONPrograms}, \cite{Tawarmalani2005AOptimization}) under Pyomo package in Python (\cite{Hart2017PyomoPython}, \cite{Hart2011Pyomo:Python}). Specifically, BARON is a computational system for finding the global solution for non-linear non-convex programs. In all computations in Baron, the iterations are terminated when the approximate solution is not improved by at least $10^{-12}$ in absolute value.

 After fitting both models, we first compute the  reward gained by using each one for correctly classifying whether an item was purchased or not. Specifically, we compute the  logarithm of the predicted probability of an item chosen in the transaction, given the offer set. The sum of the indices for test sets of different sizes is summarized in Table \ref{table:index}.
\begin{table}[!h]
\centering
\caption{Reward Index for MNL and Halo-MNL models on different sizes of datasets.}
\begin{tabular}{lccccc}
\hline
\multirow{2}{*}{\textit{Model}} & \multicolumn{5}{c}{\textit{Training Set Sample Size}}\\
      & \textbf{200}  & \textbf{300}  & \textbf{400} & \textbf{500} & \textbf{600} \\ \hline \hline
\textbf{MNL}  & 422.599 & 384.542 & 342.371 & 299.531 & 254.825 \\
\textbf{Halo MNL}  & 404.374 & 360.835 & 342.419 & 310.092 & 259.404 \\ \hline
\end{tabular}
\label{table:index}
\end{table}

Based on this index, we observe that the proposed Halo-MNL model has a weaker performance than the MNL for smaller sizes of training sets, while for sufficient large training sets, both models perform equally well with the Halo-MNL model being slightly better.  

Furthermore, we perform a goodness of fit  to the multinomial distribution test for both models. For the purpose of the goodness of fit test, we only consider a training set consisting of 500 periods. We use maximum likelihood to fit both models into consideration and we compute the chi-square test statistic. In order to compute the corresponding $p$-values, we need to have a sufficiently large dataset relative to the estimated probabilities of purchase (for each offer set separately). Statistically, this condition translates to requiring $n\;p_{i}>10$ $\forall i= 1, 2, ..., N$, where $n$ is the number of times a specific offer set appears in the data and $p_i$ is the  probability of purchasing item $i$ given that this offer set. Unfortunately, this condition is not met in most cases of the offer sets, therefore we result in computing the $p$-values using bootstrap.

The bootstrap approach essentially relies on random sampling with replacement, in order to capture the variability of the  underlying population. Typically the $p$-value from a sample is a fixed number. However, with random re-sampling with replacement, we are able to obtain a new sample at every iteration, eventually leading to a different $p$-value. In our case, we randomly sample 1000 times with replacement  from the offer sets that correspond to the same availability vector, each time calculating the $p$-value. In  Table \ref{table:boot-p} we record the  median $p$-value   both for the MNL  and Halo-MNL models.
\begin{table}[!h]
\centering
\caption{Median Bootstrap p-values corresponding to different offer sets.}
\begin{tabular}{lcccc}
\hline
\multirow{2}{*}{\textit{Model}} & \multicolumn{4}{c}{\textit{Offer Set ID}}\\ 
 & \textbf{no. 1} & \textbf{no. 2} & \textbf{no. 3} & \textbf{no. 4}\\ \hline \hline
\textbf{MNL}      & 0.342 & 0.984 & 0.210 & 0.189 \\
\textbf{Halo-MNL} & 0.000   & 1.000 & 0.641 & 0.263 \\ \hline
\end{tabular}
\label{table:boot-p}
\end{table}
Based on these results, we observe that in three out of four different offer sets, the Halo-MNL outperforms the MNL in terms of how the model fits the data, resulting in a higher $p$-value. 
 
For this example, if we carefully study the estimated probabilities based on both models, we observe that the probability of purchase of an item is decreased, when another competing product is removed from the offer set. For example, if we compare an offer set when only `item 2' (for example) is missing, with an offer set where no items are missing, we observe that the probability of purchasing `item 8', for example, is lower in the first case. This is an interesting situation where the absence of an item from the offer set leads to a decrease in the probability of others being purchased. Therefore, a model capturing positive and negative effects among products is needed.

The results for log-likelihood, AIC, and BIC scores for the testing set are summarized in Table \ref{table:testing}.
\begin{table}[!h]
\centering
\caption{Model comparison in Testing data set; Hotel Data.}
\begin{tabular}{lccc}
\hline
 \textit{Model}  & \textbf{Likelihood Score}  & \textbf{AIC Score}  & \textbf{BIC Score} \\ \hline \hline
\textbf{MNL}      & -304.43 & 624.83 & 699.49 \\
\textbf{Halo-MNL} & -301.34 & 666.68 & 965.34 \\ \hline
\end{tabular}
\label{table:testing}
\end{table}

According Table \ref{table:testing}, the likelihood of the testing set under Halo MNL model is higher than MNL model, however the AIC and BIC scores of the MNL model is better. This was predictable according to the simulations done before, since it is shown that as the number of offer sets in the training set increases, the Halo-MNL model tends to be better than MNL model in AIC and BIC scores; and for small number of offer sets, MNL model performs better. 

Finally, we want to compare the likelihood, AIC and BIC scores for both models in the full dataset. For this reason, we use Baron in both cases to compute the maximum likelihood estimates. The CPU time of run for MNL model was 6.4 seconds and for our own model was 20.48 seconds. The results are summarized in Table \ref{table:full}. We observe that under all metrics, and despite the larger number of parameters, the Halo-MNL has a significantly better fit to the data than the MNL.

\begin{table}[!h]
\centering
\caption{Model comparison in Full dataset.}
\begin{tabular}{lccc}
\hline
 \textit{Model}  & \textbf{Likelihood Score}  & \textbf{AIC Score}  & \textbf{BIC Score} \\ \hline \hline
\textbf{MNL}      & -1259.91 & 2535.82 & 2576.17 \\
\textbf{Halo-MNL} &  -1192.10 & 2404.20 & 2454.64 \\ \hline
\end{tabular}
\label{table:full}
\end{table}

\section{Conclusion}
Our main goal in this paper was to suggest a novel customer choice model in the assortment selection framework that captures  positive and/or negative pairwise effects between products that are present or absent from a collection. Therefore, we propose a generalization of the classical MNL model that includes interaction effects. In this way, when assessing the customer preferences, we are able to understand the synergistic (or not) effect of items that are absent from an assortment.

Furthermore, since the proposed model has a significantly higher number of parameters to estimate, compared to the MNL, we derived identifiability and partial identifiability conditions that will ensure the validity of the parameter estimates. Following a maximum likelihood approach, we proposed closed-form solutions for the parameter estimators in specific scenarios.

Finally, we studied the numerical performance of our model in a simulated environment, as well as in a read data example.  Based on our numerical studies, we observed that when sufficiently enough data exist, the Halo-MNL model outperforms the classical MNL in terms of how well it fits the data into consideration. This is indeed true, even when the metrics used penalize  for the number of parameters in the model.

In conclusion, this is a first attempt in the literature to model \textit{positive} interaction effects in customer choice models in the context of assortment planning. There are, of course, several questions that still need to be addressed, such as revenue management  under such models,  models that have a richer dependence structure than the  proposed one, as well as efficient  statistical methods for parameter estimation, but these are left for future work.

\section{Appendix}

True Parameters for the Halo MNL model for the first set of simulations in Figure \ref{fig:error} and Table \ref{table:AIC-BIC}:

\[
A=
  \begin{bmatrix}
    0     & -0.241 & 0.182 & 0.105 & 0.182 & 0.095 & 0.140 & 0.134 & 0.182 & 0\\
   -0.236 & 0      & 0.110 & 0.125 & 0.134 & 0.049 & 0.095 & 0.069 & 0     & 0\\
    0.095 & 0.095  & 0     &-0.069 & 0.082 & 0.095 & 0.095 & 0     & 0.095 & 0\\
    0.086 & 0.042  &-0.143 & 0     &-0.059 & 0.049 & 0.049 & 0.069 & 0     & 0\\
    0.049 & 0.082  & 0.033 &-0.022 & 0     & 0.025 & 0.00  & 0     & 0.095 & 0\\
    0.077 & 0.095  & 0.017 & 0.065 &-0.121 & 0     &-0.105 & 0.069 & 0.182 & 0\\
    0.030 & 0.069  & 0.033 & 0.065 & 0.056 & 0.072 & 0     & 0     & 0     & 0\\
    0.030 & 0.056  & 0.065 & 0.085 & 0.082 & 0     &-0.051 & 0     & 0.262 & 0\\
    0.039 &-0.029  & 0.033 & 0.022 & 0     & 0.049 & 0     & 0.069 & 0     & 0\\
    0     & 0      & 0     & 0     & 0     & 0     & 0     & 0     & 0     & 0\\
  \end{bmatrix}
\]
\[\mu = [0, -0.357, -0.511, -0.799, -1.050, -0.916, -1.609, -1.966, -2.303, 0]\]

True Parameters for the Halo MNL model for the second set of simulations in Section 5.1:

\[
A=
  \begin{bmatrix}
    0     & 0.055& -0.094& -0.035& -0.256&  0.458& -0.095& -0.305& -0.183& 0\\
    0.137 &  0   & -0.330& -0.255& -0.115& -0.219&  0.210& -0.404&  0.380& 0\\
    0.473 &-0.070&  0    & -0.190&  0.239&  0.458& -0.295&  0.286& -0.479& 0\\
    0.417 & 0.365&  0.375&  0    &  0.020&  0.472&  0.499& -0.214&  0.266& 0\\
   -0.162& 0.390 & -0.394& -0.465&  0    & -0.014& -0.328& -0.119& -0.307& 0\\
   -0.112&-0.239 &  0.131&  0.231&  0.373&  0    &  0.088& -0.244&  0.374& 0\\
    0.479&-0.427 & -0.320& -0.343&  0.346&  0.301&  0    &  0.169&  0.281& 0\\
    0.276& 0.238 & -0.053& -0.075&  0.403& -0.291&  0.057&  0    &  0.466& 0\\
    0.215&-0.176 &  0.356&  0.317&  0.013& -0.086&  0.023&  0.287&  0    & 0\\
    0    & 0    &  0    &  0    &  0    &  0    &  0    &  0    &  0     & 0\\
  \end{bmatrix}
\]
{\small
$\mu = (-1.050, -0.565, -0.130, -0.225, -1.210, -0.904, -1.031, -0.755, -0.588,  0)$
}

\bibliographystyle{plain}%{informs2014} % outcomment this and next line in Case 1
\bibliography{halo-refs.bib}

\end{document}